\newtheorem{theorem}{Theorem}
\newenvironment{proof}[1][Proof]{\noindent\textbf{#1.} }{\ \rule{0.5em}{0.5em}}
\definecolor{nblue}{rgb}{0.2,0.2,0.7}
\definecolor{ngreen}{rgb}{0.2,0.6,0.2}
\definecolor{nred}{rgb}{0.7,0.2,0.2}
\definecolor{nblack}{rgb}{0,0,0}
\newcommand{\bes} {\begin{subequations}}
\newcommand{\ees} {\end{subequations}}
\newcommand{\bea} {\begin{eqnarray}}
\newcommand{\eea} {\end{eqnarray}}
\newcommand{\beq}{\begin{equation}}
\newcommand{\eeq}{\end{equation}}
\def\>{\rangle}
\def\<{\langle}
\def\Tr{\textrm{Tr}}
\newcommand{\ignore}[1]{}
\definecolor{nblue}{rgb}{0.2,0.2,0.7}
\definecolor{ngreen}{rgb}{0.2,0.6,0.2}
\definecolor{nred}{rgb}{0.7,0.2,0.2}
\definecolor{nblack}{rgb}{0,0,0}
\begin{document}

\title{From clocks to cloners: Catalytic transformations under covariant operations and recoverability}


\author{Iman Marvian}
\affiliation{Research Laboratory of Electronics, Massachusetts Institute of Technology, Cambridge, MA 02139}
\author{Seth Lloyd}
\affiliation{Research Laboratory of Electronics, Massachusetts Institute of Technology, Cambridge, MA 02139}
\affiliation{Department of Mechanical Engineering, Massachusetts Institute of Technology, Cambridge, MA 02139}

\begin{abstract}

There are various  physical scenarios in which one can only implement operations with a certain symmetry. Under such restriction, a system in a symmetry-breaking state can be used as a  catalyst, e.g. to prepare another system in a desired symmetry-breaking state.   This sort of (approximate) catalytic state transformations are relevant in the context of  (i) state preparation using a bounded-size quantum clock or reference frame, where the clock or reference frame acts as a catalyst, 
(ii) quantum thermodynamics, where again a clock can be used as a catalyst to prepare states which contain coherence with respect to the system Hamiltonian, and (iii)  cloning of unknown quantum states, where the given copies of state  can be interpreted as a catalyst for preparing the new copies. Using a recent result of Fawzi and Renner on approximate recoverability, we show that the achievable  accuracy in this kind of catalytic transformations  can be determined by a single function, namely the relative entropy of asymmetry, which 
 is equal to the  difference between the entropy of state and its symmetrized version: if the desired state transition  does not require a large increase of this quantity, then it can be implemented with high fidelity using only symmetric operations.  Our lower bound on the achievable  fidelity is tight in the case of cloners, and can be achieved using the Petz recovery map, which interestingly turns out to be the optimal cloning map found by Werner. 

\end{abstract}
\date{\today}

\maketitle

\section{Introduction}
A wide  range of seemingly different physical and information theoretic problems can be understood and formalized as special cases of the following abstract problem: we want to transform a known input state  to a desired output state, or a state close to it, under the restriction that we can only implement operations which respect a certain symmetry.  As a result of this restriction on the allowed operations, many state transformations are not possible. The simplest  example is the impossibility of transforming a symmetric state to a symmetry-breaking one, which is reminiscent of the Curie's principle.  The relevant property of states that determines whether such transformations are possible or not, can be interpreted as the symmetry-breaking, or the \emph{asymmetry} of state relative to the symmetry under consideration. The resource theory of asymmetry is a framework for classifying and quantifying this property, and answering this sort of questions about state transitions under symmetric operations  \cite{gour2008resource, Asymy2, MS_Short, Noether}.

In this paper we consider a special case of this general problem, namely \emph{catalytic} transformations under symmetric operations. More specifically, we consider the case where a system in a symmetry-breaking state is used as a \emph{catalyst} to transform another system in an initial symmetric state to a desired  asymmetric state, such that the catalyst remains (approximately) unchanged. We argue that this scenario naturally arises  in the context of (i) state preparation using a bounded size quantum clock or  quantum reference frame \cite{QRF_BRS_07},  (ii) quantum thermodynamics  \cite{brandao2013resource}, and (iii)  cloning of unknown quantum states \cite{werner1998optimal}. Then, using a recent result of Fawzi and Renner \cite{fawzi2015quantum} on approximate recoverability, we show that the achievable  accuracy in this kind of catalytic transformations  can be determined by a single measure of asymmetry, namely the \emph{relative entropy of asymmetry}, which turns out to be equal to the  the entropy of the symmetrized version of state minus the entropy of the original state. If this quantity does not increase considerably in the desired state transformation  then the transformation can be implemented  with high fidelity using only symmetric operations. As we explain later, this is somewhat surprising, because for  general state transformations under symmetric operations, just by looking to a single measure of asymmetry one cannot guarantee  the possibility of a state transition, or an approximate version of it. We discuss some applications of this result in various contexts. In particular, we show that in the case of cloning  of quantum states our lower bound on the achievable fidelity can be achieved using the Petz recovery map \cite{petz1988sufficiency, petz1986sufficient}, which interestingly  turns out to be the optimal cloning map  originally  found  by Werner \cite{werner1998optimal}. 

The article is organized as follows. In Sec. \ref{Sec:Catal} we first provide a formal presentation of  the problem of catalytic transformations under symmetric operations, and then we show examples of different physical scenarios which all can be formalized as special examples of this problem. Then, in Sec. \ref{Sec:meas} we review the concept of measures of asymmetry, and in particular, we discuss about the \emph{relative entropy of asymmetry,} which is the measure of asymmetry we are concerned with in this paper. In Sec. \ref{Sec:upper} and Sec. \ref{Sec:lower} we present   upper and lower bounds on achievable accuracy of catalytic transformations in terms of the increase of the relative entropy of asymmetry in the desired transformation. In Sec. \ref{Sec:app} we present applications of these bounds in the context of state preparation using bounded size clocks, and cloning of unknown quantum states.  Finally, in Sec. \ref{Sec:revers} we present a generalization of our lower bound in Sec \ref{Sec:lower}, which shows   if during a symmetric process the relative entropy of asymmetry does not drop considerably, then the process can be approximately inverted using a symmetric operation.

\section{Catalytic Transformations }\label{Sec:Catal}

 In this paper we study (inexact) catalytic state transformations  in the following form 
\beq\label{main2}
\tau^R\otimes \rho_{\text{sym}}^S\xrightarrow{\text{Covariant}} \tau^R\otimes \sigma^S \ .
\eeq
Here $\tau^R$ and $\rho_{\text{sym}}^S$ are initial states of systems $R$ and $S$, which we sometimes refer to them as the \emph{reference frame} and \emph{system} respectively, and $\tau^R\otimes\sigma^S$ is the desired joint final state of these systems.  We assume $\rho_{\text{sym}}^S$ is invariant under the action of a certain symmetry. Then, by applying a time evolution which also has this symmetry, or is \emph{covariant} with respect to this symmetry,   we try to map the initial state $\tau^R\otimes \rho_{\text{sym}}^S$ to a state close to the desired  state $\tau^R\otimes \sigma^S$.

The most general physical time evolution that can be implemented on a quantum system (with probability one) can be described by a trace-preserving completely positive linear map, also known as \emph{quantum channel}. This includes unitary transformations, measurements, and adding systems or discarding subsystems. We consider quantum channels which are covariant  with respect to an arbitrary symmetry. Let $G$ be the group corresponding to the symmetry under consideration, $U^X_g$ denote the unitary representation of the group element $g\in G$ on system $X$, and $\mathcal{U}^X_g(\cdot)=U^X_g(\cdot){U_g^X}^\dag$ be the corresponding super-operator. Then, under the action of the group element $g\in G$ state $\rho^X$ of system $X$ transforms to  $\mathcal{U}^X_g(\rho^X)$. 
The representation of symmetry on the joint system $X_1$ and $X_2$  is given by $\mathcal{U}^{X_1X_2}_g=\mathcal{U}^{X_1}_g\otimes \mathcal{U}^{X_2}_g$.  As we see in the following examples, the form of representation  is often dictated by the physical interpretation of the symmetry. 

Then,  state $\rho^X$ is symmetric with respect to the symmetry under consideration, or \emph{G-invariant}, if for all $g \in G$, $\mathcal{U}^X_g(\rho^X)=\rho^X$. Similarly, a quantum channel   $\mathcal{E}^{X\rightarrow Y}$ from system $X$ to system $Y$ is called \emph{G-covariant}, or symmetric, if it commutes with the action of group, i.e.
\beq\label{cov}
\mathcal{E}^{X\rightarrow Y}\circ \mathcal{U}^X_g=\mathcal{U}^{Y}_g\circ \mathcal{E}^{X\rightarrow Y}\ \  , \ \ \ \ \  \forall g\in G\ .
\eeq
Note that to specify the set of G-covariant channels from system $X$ to system $Y$ we need to know the representation of symmetry $G$ on both spaces.   
We often suppress the superscript corresponding to the label of the system, when there is no chance of confusion. Also, since group $G$ is often clear from the context, we sometimes use the term \emph{covariant} or symmetric instead of $G$-covariant.  


We are interested to determine whether using only covariant operations, one can implement  the desired state transition in Eq.(\ref{main2}), or an approximate version of it.  But since, by assumption  state $\rho_\text{sym}^S$  in the input is invariant under the action of the group, i.e. $\mathcal{U}^S_g(\rho_\text{sym}^S)=\rho_\text{sym}^S$ for all $g\in G$, it turns out that the answer to this question is independent of this state. This is because for any state $ \tau^R$ and symmetric state  $\rho_\text{sym}^S$, there exists covariant channels which transform state $ \tau^R$ to $ \tau^R\otimes \rho_\text{sym}^S$, and vice versa. Therefore,  the  transformation in Eq.(\ref{main2}) can be implemented with covariant channels with certain error if and only if the state transition 
\beq\label{main}
\tau^R\xrightarrow{\text{Covariant}} \tau^R\otimes \sigma^S \ 
\eeq
can be implemented with the same error. Hence, in the following we sometimes consider this more concise form of the problem.

Unless state $\sigma^S$ in the output is also a symmetric state,  in general, using only covariant operations the state transition  $\tau^R\rightarrow \tau^R\otimes\sigma^S$  cannot be implemented exactly. Roughly speaking, this is because covariant channels cannot generate asymmetry (relative to the symmetry under consideration), and therefore their output state cannot contain more asymmetry than their input.  Later we see a more precise and quantitative version of this statement in terms of \emph{measures of asymmetry}.

To explain  the physical relevance of the catalytic transformations described in Eqs.(\ref{main2}, \ref{main}), in the following we consider some illustrative examples. 

\subsection{Clocks}
Catalytic transformations described above provide a natural framework for understating how clocks are used in a state preparation  process.  Suppose we want to prepare the system $S$ with Hamiltonian $H^S$ in state $\sigma^S$ which contains coherence in the energy eigenbasis. Any such state is time-dependent, and therefore the above statement about the state of system $S$ is ambiguous, unless we say $\sigma^S$ is the state of system at certain time $t$ with respect to a clock $R$. 
The clock could be, for instance, a Harmonic Oscillator oscillating at a certain frequency, or a free  particle moving on a line. Then, to prepare the system $S$ in state $\sigma^S$ we need to directly or indirectly interact it with the clock $R$. But, there is a restriction on the set of operations that we can implement on the system and clock: If the reference for time is defined just by the clock $R$ and the rest of systems do not have any information about how the time reference is defined,  then the only operations we can perform on the system $S$ and clock $R$ are those which do not explicitly or implicitly depend on the time reference. More precisely, these are operations which are covariant with respect to the group of time translations, i.e.  they  satisfy
\beq\label{TI}
\forall t\in \mathbb{R}:\  \mathcal{E}(e^{-i H_{\text{tot}} t }(\cdot)e^{i H_{\text{tot}} t } )=e^{-i H_{\text{tot}} t }\mathcal{E}(\cdot)e^{i H_{\text{tot}} t }\ ,
\eeq
where $H_{\text{tot}}=H^R+H^S$ is the total Hamiltonian of system and clock, and we have suppressed ${RS\rightarrow RS}$ superscript (We assume $\hbar=1$ throughout the paper).


Therefore, in this context  Eq.(\ref{main2}) has the following interpretation: to prepare the system $S$ in state $\sigma^S$ we couple it to a clock $R$, such that at the moment where the clock is in state $\tau^R$ the system is in state $\sigma^S$. Note that ideally the clock should play the role of a catalyst, i.e. it should  remain unaffected in the process.


The restriction to the set of channels which are invariant under time translations  also arises in the context of quantum thermodynamics,  where the only \emph{free} unitaries and states are,  respectively, energy-conserving unitaries and thermal states \cite{brandao2013resource, lostaglio2015quantum, lostaglio2015description}. Since these unitaries and states are all invariant under time-translations, it follows that the only free quantum operations in this resource theory  are those which are invariant under time-translations, i.e. satisfy Eq.(\ref{TI}). 
Again, in this context a clock, i.e. a system in a state that breaks the time-translation symmetry can be used as an approximate catalyst to perform certain transformations \cite{Aberg2014}.\footnote{Note that in the resource theory of thermodynamics, as defined in \cite{brandao2013resource}, systems could be in arbitrary non-equilibrium (time-dependent) states. Then, some state transitions are forbidden in this resource theory, simply because, e.g. the initial state is time-independent and the desired final state is time-dependent, and to prepare a time-dependent state one needs a clock, which is not allowed in this resource theory.  Indeed, this resource theory can be thought as the combination of two different resource theories: A resource theory for time-translation asymmetry (or \emph{unspeakable} coherence \cite{marvian2016quantify}), and a resource theory for energy and purity, where in the second resource theory all states are incoherent in the energy eigenbasis.  The latter resource theory seems to be more related to the standard thermodynamics, where all states are assumed to be close to \emph{equilibrium}.  }


\subsection{Reference Frames}

Similar to clocks, which are reference frames for time, other reference frames for other physical degrees of freedom can  be interpreted as catalysts.   For instance, when we use a quantum gyroscope to prepare another system in a direction defined by the gyroscope, we can only implement transformations which can be achieved using  isotropic operations, i.e. operations which are covariant with respect to the group of rotations $G=\text{SO}(3)$. Therefore, the task of  preparing system $S$ in a direction defined by the gyroscope  $R$ can be formally phrased as a catalytic transformation in the form of Eq.(\ref{main}). Another example, which is relevant in the context of quantum optics, is preparing states using  a phase reference for the phase conjugate to the photon number operator in a particular optical mode. Here, the relevant symmetry is the group of phase shifts, which is isomorphic to U(1) (See \cite{QRF_BRS_07} for an overview of quantum reference frames).    

\subsection{Cloners}

Another motivation for considering the catalytic  transformations in Eq.(\ref{main}) comes from the study of quantum cloning machines, or quantum \emph{cloners}: Suppose we are given $n\ge 1$ copies of an unknown state $\psi\in\mathbb{C}^d$, and we are interested to  generate $n+k$ copies of this state. The no-cloning theorem implies that for any $k>0$ the transformation $\psi^{\otimes n}\rightarrow \psi^{\otimes (n+k)}$  can  be implemented  only approximately  \cite{wootters1982single}. 

Therefore, to compare the performance of different cloners, we need to consider a figure of merit, such as fidelity of cloning. For most applications,  we expect that a good cloner should act equally well on all pure states, and hence the figure of merit should care equally about all pure states. This natural requirement implies that the optimal cloner $\mathcal{E}^{n\rightarrow n+k}$ from $n$ copies to $n+k$ copies can always be chosen to  satisfy the covariance condition
\beq\label{cloner}
\mathcal{E}^{n\rightarrow n+k}(U^{\otimes n}(\cdot) {U^\dag}^{\otimes n})=U^{\otimes {(n+k)}}\mathcal{E}^{n\rightarrow n+k}(\cdot) {U^\dag}^{\otimes (n+k)}\ ,
\eeq
for all $U\in \text{U}(d)$ \cite{werner1998optimal}. Note that  if a cloner satisfies this symmetry  then its performance on all pure states is uniquely specified by its performance on one pure state, because all pure states can be transformed to each other by unitary transformations. Therefore, assuming this symmetry holds, to evaluate the performance of the cloner on an unknown pure state  we can just focus on its action on one particular \emph{known} pure state $\psi$.  Then, in the desired transformation $\psi^{\otimes n}\rightarrow \psi^{\otimes (n+k)}$, we can interpret the given copies of state $\psi$ as catalyst. In other words, choosing $\tau^R=\psi^{\otimes n}$ and $\sigma^S=\psi^{\otimes k}$ in Eq.(\ref{main}), this equation describes the action of an ideal cloner. 

\section{Relative entropy of asymmetry}\label{Sec:meas}
A measure of asymmetry is a function that quantifies the amount of symmetry-breaking of states relative to a given symmetry \cite{Noether}. Formally, a function $f$ from states to real numbers is a measure of asymmetry with respect to a symmetry, if (i) it vanishes for all symmetric states, and (ii) it is non-increasing under G-covariant operations, i.e. $f(\mathcal{E}(\rho))\le f(\rho)$ for any state $\rho$ and G-covariant channel $\mathcal{E}$. Note that to define measures of asymmetry on a given Hilbert space, we need to know the representation of the symmetry on that space. 


A well-studied  measures of asymmetry is the \emph{relative entropy of asymmetry}, defined by $\Gamma(\rho)\equiv \inf_{\omega\in \text{sym}(G)}S(\rho\|\omega)$, where $S(\rho_1\|\rho_2)=\Tr(\rho_1(\log\rho_1-\log\rho_2))$ is the relative entropy, and  the minimization is over all states which are invariant under group $G$, i.e. $\mathcal{U}_g(\omega)=\omega$, for all $g\in G$ (Throughout the paper  we assume the base of logarithm is 2). Roughly speaking, this function quantifies the distance between state $\rho$ and the set of symmetric states, in terms of relative entropy. As shown in \cite{GMS09}, it turns out that the relative entropy of asymmetry is equal to the the entropy of the symmetrized version of state minus the entropy of the original state, that is 
\begin{align}\label{rel}
\Gamma(\rho)\equiv\inf_{\omega\in \text{sym}(G)}S(\rho\|\omega)= S(\rho\|\mathcal{G}(\rho))=S(\mathcal{G}(\rho))-S(\rho)\ ,
\end{align}
where  $S(\rho)=-\Tr(\rho \log\rho)$ is the von-Neuman entropy\footnote{See also \cite{vac2008} for an earlier study of  function $\Gamma(\rho)=S(\mathcal{G}(\rho))-S(\rho)$.}.  Here, the superoperator $\mathcal{G}$ is the uniform twirling  over the group $G$, which projects its input to a symmetric state, and in the case of finite groups is given by
\beq\label{twr}
\mathcal{G}(\cdot)=\frac{1}{|G|} \sum_{g\in G} \mathcal{U}_g(\cdot)\ ,
\eeq
where $|G|$ is the order of group, and we have suppressed the superscript corresponding to the label of system. For continuous groups, such as U(1) or SO(3), the  sum in Eq.(\ref{twr}) is replaced by the integral over the group with uniform (Haar) measure. Also, for the group of translations generated by a Hamiltonian $H$, or more generally  any other observable, the uniform twirling can be defined as 
\beq
\mathcal{G}(\cdot)=\lim_{T\rightarrow \infty}\frac{1}{2T}\int_{-T}^Tdt\ e^{-i H t}(\cdot) e^{i H t}=\sum_{n} \Pi_n(\cdot)\Pi_n\ ,
\eeq
where $\{\Pi_n\}_n$ are projectors to the eigen-subspaces of $H$. Therefore, in this case the uniform twirling  is equal to the \emph{dephasing} map, that is the map that dephases its input relative to the eigenbasis of $H$.\footnote{In this special case, the relative entropy of asymmetry, is called the \emph{relative entropy of coherence} by  \cite{Coh_Plenio}.} Note that the uniform twirling $\mathcal{G}$ is a special case of the \emph{resource destroying} maps recently introduced in \cite{liu2016theory}, which leave resource-free states (in this case symmetric states) unchanged and erase the resource (in this case asymmetry) in all other states.

In addition to being a measure of asymmetry, relative entropy of asymmetry has some other useful properties: (i) It is convex, i.e. $\Gamma(p\rho_1+(1-p)\rho_2)\le p\Gamma(\rho_1)+(1-p)\Gamma(\rho_2)$ for all $0\le p\le 1$. (ii) The fact that the representation of a group element $g\in G$ on the joint system $X_1$ and $X_2$  is given by $\mathcal{U}^{X_1}_g\otimes \mathcal{U}^{X_2}_g$ implies that for any group $G$ and state $\omega$ in a finite-dimensional Hilbert space, $\Gamma(\omega^{\otimes n})$ increases at most logarithmically with $n$. (iii) For a finite group $G$, $\Gamma$ is bounded by $\log|G|$. 

\section{Achievable accuracy in catalytic transformations}
In this section we find lower and upper bounds on the achievable accuracy in the catalytic transformations in Eqs.(\ref{main2},\ref{main}),  in terms of the increase in the relative entropy of asymmetry in the ideal transformation. 

\subsection{Upper bound on accuracy}\label{Sec:upper}
Using covariant channels we can only implement state transitions which do not require increase of the relative entropy of asymmetry.  Furthermore, if the asymmetry of the desired output state is much larger than the asymmetry of the input, then the transformation cannot be implemented, even approximately.  In the Supplementary Material, using  the Fannes-Audenaert inequality \cite{Audenaert:07}, we prove the following: Suppose we want to transform state $\rho^X$ of system $X$ to a state close to state $\omega^Y$ of system $Y$. Then, for any arbitrary G-covariant channel $\mathcal{E}^{X\rightarrow Y}$ from system $X$ to system $Y$, let $\epsilon=\|\mathcal{E}^{X\rightarrow Y}(\rho^X)- \omega^Y\|_1$ be the trace distance between the actual output $\mathcal{E}^{X\rightarrow Y}(\rho^X)$  and the desired output $ \omega^Y$, where   $\|\cdot\|_1$ is the sum of the singular values of the operator. Then, either $\epsilon\ge 1$, or
\beq\label{lower}
 \epsilon\times \log D_Y + 2 H(\frac{\epsilon}{2}) \ge {\Delta \Gamma}   \ ,
\eeq
where  $\Delta \Gamma=\Gamma(\omega^Y)-\Gamma(\rho^X)$ is the increase of the relative entropy of asymmetry in the desired transition, $D_Y$ is the rank of  $\mathcal{G}(\omega^Y)$, and $H(x)=-x\log x-(1-x)\log(1-x)$ is the binary entropy function. 
Using the bound $\epsilon+ 2 H(\frac{\epsilon}{2})\le 3\sqrt{\epsilon}$ for $\epsilon\in(0,1)$, this leads to the following explicit lower bound on the achievable accuracy
\beq\label{lower2}
\sqrt{\|\mathcal{E}^{X\rightarrow Y}(\rho^X)- \omega^Y \|_1 }\ge \frac{{\Delta\Gamma}}{3\log D_Y}\ .
\eeq
Eqs.(\ref{lower}) and (\ref{lower2})  can be translated to an upper bound on the fidelity of $\mathcal{E}^{X\rightarrow Y}(\rho^X)$ and $\omega^Y$, where the fidelity is defined by $F(\rho_1,\rho_2)\equiv\|\sqrt{\rho_1}\sqrt{\rho_2}\|_1$, and satisfies  $2(1-F(\rho_1,\rho_2))\le \|\rho_1-\rho_2\|_1$. Also, note that by choosing systems $X=R$ and $Y=RS$, we can apply these bounds   to the special case of catalytic transformations in the form of Eqs.(\ref{main2},\ref{main}).


\subsection{Lower bound on accuracy}\label{Sec:lower}

We saw that if a state transformation requires a large increase of the relative entropy of asymmetry, then it cannot be implemented with high fidelity using covariant channels. On the other hand, even the transitions  in which  this quantity does not increase  may also be forbidden under covariant operations. We will see  examples where  $\Gamma(\rho^X)$ is much larger than    $\Gamma(\omega^Y)$ and still the transition $\rho^X\rightarrow\omega^Y$ cannot be implemented  with covariant channels, even approximately. This should be expected because different states can break a given symmetry in different ways, i.e. they may break some subgroups but not the others. However, a single measure of asymmetry cannot see this difference (Note that this is not the only reason that a transition might be forbidden. Indeed, even for (cyclic) groups of prime order, which do not have non-trivial subgroups,  a single measure of asymmetry does not contain  enough information to determine if a state transition is possible or not.).   

However, perhaps surprisingly, it turns out that in the case of catalytic transitions described in Eq.(\ref{main}), or equivalently Eq.(\ref{main2}), just by looking to a single measure of asymmetry, namely the relative entropy of asymmetry, we can determine if  the transition can be approximately implemented with covariant operations.  
To show this, we use some recent results on approximate recoverability \cite{fawzi2015quantum} to find a lower bound on the achievable fidelity of implementing the catalytic  transformations in  Eq.(\ref{main2}) and Eq.(\ref{main}). See also Sec.\ref{Sec:revers} on reversibility, for a more general proof of this lower bound. 

Consider the tripartite state 
\begin{align}
\Sigma^{CRS}&=\frac{1}{|G|}\sum_{g\in G} |g\rangle\langle g|^C\otimes  \mathcal{U}^{R}_g(\tau^{R}) \otimes \mathcal{U}^{S}_g(\sigma^{S})\ ,
\end{align}
where $C$, or the $\emph{classical}$ background reference frame, is a system with the Hilbert space spanned by the orthogonal states $\{|g\rangle:g\in G\}$. In the case of continuous groups such as U(1) or SO(3) we replace the sum with an integral over group with uniform (Haar) measure. 
Let $\Sigma^{CR}=\Tr_S(\Sigma^{CRS})$  and $\Sigma^{RS}=\Tr_C(\Sigma^{CRS})$  be  the reduced state of the joint systems $CR$ and $RS$, respectively. Then, it can be easily shown that the quantum mutual information between system $R$ (or $S$) and $C$ is equal to $\Gamma(\tau^R)$ (or $\Gamma(\sigma^S)$). 
Furthermore, the conditional mutual information  
\beq
I(S:C|R)_\Sigma=S(\Sigma^{CR})+S(\Sigma^{RS})-S(\Sigma^{CRS})-S(\Sigma^{R})\ ,
\eeq
turns out to be equal to
\beq\label{eq1}
I(C:S|R)_\Sigma=\Gamma(\tau^R\otimes \sigma^S)-\Gamma(\tau^R)\equiv \Delta\Gamma , 
\eeq
that is the difference between the relative entropy of asymmetry for states $\tau^R\otimes \sigma^S$ and $\tau^R$.  

According to a recent result of Fawzi and Renner \cite{fawzi2015quantum} (See also \cite{wilde2015recoverability, berta2015renyi, sutter2016strengthened}) if the conditional mutual information $I(C:S|R)_\Sigma$ is small then state $\Sigma^{CRS}$ can  be approximately reconstructed from the reduced state $\Sigma^{CR}$ in the following sense:  there exists a quantum channel  $\mathcal{R}^{R\rightarrow RS}$ which maps system $R$ to systems $RS$ such that  
 \beq\label{FR}
\text{F}(\mathcal{R}^{R\rightarrow RS}(\Sigma^{CR}), \Sigma^{CRS}) \ge 2^{ -\frac{1}{2}I(C:S|R)_\Sigma }\ ,
\eeq
where we have suppressed the identity super-operator which acts on system $C$. 
The recovery map which satisfies Eq.(\ref{FR}) can be chosen to be  a rotated version of the Petz recovery map \cite{fawzi2015quantum, wilde2015recoverability, berta2015renyi}.

Then, the reconstructed state has the following form 
\beq
\mathcal{R}^{R\rightarrow RS}(\Sigma^{CR})=\frac{1}{|G|}\sum_{g\in G} |g\rangle\langle g|^C\otimes  \mathcal{R}^{R\rightarrow RS}\circ\mathcal{U}^{R}_g(\tau^{R}) \ .
\eeq
The fidelity of this state with state $\Sigma^{CRS}$ is given by
\bes
\begin{align*}
&\text{F}(\mathcal{R}^{R\rightarrow RS}(\Sigma^{CR}), \Sigma^{CRS})\\ &=\frac{1}{|G|}\sum_{g\in G} F(\mathcal{R}^{R\rightarrow RS}\circ\mathcal{U}^{R}_g(\tau^{R}) ,  \mathcal{U}^{RS}_g(\tau^{R}\otimes \sigma^{S}))\\ &=\frac{1}{|G|}\sum_{g\in G} F(\mathcal{U}^{RS}_{g^{-1}}\circ \mathcal{R}^{R\rightarrow RS}\circ\mathcal{U}^{R}_g(\tau^{R}), \tau^{R}\otimes\sigma^{S})\ ,
\end{align*}
\ees
where we have used the notation $\mathcal{U}^{RS}_{g}=\mathcal{U}^{R}_{g}\otimes\mathcal{U}^{S}_{g}$. 
Here the first equality follows from the orthogonality of states $\{|g\rangle:g\in G\}$, and the second equality follows from the invariance of fidelity under unitary transformations.  
Next, using the  concavity of fidelity \cite{nielsen2000quantum}, we find that the average fidelity in the last line  is less than or equal to the fidelity of the averaged states. That is $\text{F}(\mathcal{R}^{R\rightarrow RS}(\Sigma^{CR}), \Sigma^{CRS})$ is less than or equal to 
\bes
\begin{align}
&F(\frac{1}{|G|}\sum_{g\in G} \mathcal{U}^{RS}_{g^{-1}}\circ \mathcal{R}^{R\rightarrow RS}\circ\mathcal{U}^{R}_g(\tau^{R}), \tau^{R}\otimes\sigma^{S})\\  &= F(\mathcal{R}_\text{sym}^{R\rightarrow RS}(\tau^{R}), \tau^{R}\otimes\sigma^{S})\ ,
\end{align}
\ees
where  $\mathcal{R}_\text{sym}^{R\rightarrow RS}\equiv \frac{1}{|G|}\sum_{g\in G} \mathcal{U}^{RS}_{g^{-1}}\circ \mathcal{R}^{R\rightarrow RS}\circ\mathcal{U}^{R}_g$
is the symmetrized version of $\mathcal{R}_\text{sym}^{R\rightarrow RS}$, which satisfies  the covariance condition in Eq.(\ref{cov}). Using this together with Eq.(\ref{eq1}) and Eq.(\ref{FR}) we conclude that 
\begin{theorem}\label{Thm}
There exists a covariant channel $\mathcal{E}^{R\rightarrow RS}$, i.e. a channel satisfying Eq.(\ref{cov}), which transforms $\tau^{R}$ to a state whose fidelity with the desired state  $\tau^{R}\otimes\sigma^{S}$ is lower bounded by $2^{-\frac{\Delta \Gamma}{2}}$, that is
\beq\label{abs}
 F(\mathcal{E}^{R\rightarrow RS}(\tau^{R}), \tau^{R}\otimes\sigma^{S})\ge 2^{-\frac{\Delta \Gamma}{2}  } \ ,
\eeq
where $\Delta \Gamma=\Gamma(\tau^R\otimes \sigma^S)-\Gamma(\tau^R)$ is the increase in the relative entropy of asymmetry in the ideal process $\tau^{R}\rightarrow\tau^{R}\otimes\sigma^{S}$. 
\end{theorem}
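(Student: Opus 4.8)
The plan is to exhibit the required covariant channel explicitly as the group-averaged recovery map $\mathcal{R}_{\text{sym}}^{R\rightarrow RS}$ constructed above, and then to obtain Eq.~(\ref{abs}) by chaining together three ingredients that are already in place: the identification of the conditional mutual information of $\Sigma^{CRS}$ with the asymmetry gap in Eq.~(\ref{eq1}), the Fawzi--Renner recoverability bound Eq.~(\ref{FR}), and the concavity/symmetrization estimate relating the fidelity of the recovered tripartite state to the fidelity of the symmetrized map acting on $\tau^R$. So I set $\mathcal{E}^{R\rightarrow RS}:=\mathcal{R}_{\text{sym}}^{R\rightarrow RS}$ and argue that this single channel does the job.

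Concretely, I would first insert the value $I(C:S|R)_\Sigma=\Delta\Gamma$ from Eq.~(\ref{eq1}) into Eq.~(\ref{FR}) to obtain $F(\mathcal{R}^{R\rightarrow RS}(\Sigma^{CR}),\Sigma^{CRS})\ge 2^{-\Delta\Gamma/2}$ for the (rotated Petz) recovery map $\mathcal{R}^{R\rightarrow RS}$. Next I would use the orthogonality of the register states $\{\ket{g}^C\}$ to rewrite the tripartite fidelity as a uniform average over $g\in G$ of the block fidelities $F(\mathcal{R}^{R\rightarrow RS}\circ\mathcal{U}^R_g(\tau^R),\mathcal{U}^{RS}_g(\tau^R\otimes\sigma^S))$, and the unitary invariance of fidelity to conjugate each block by $\mathcal{U}^{RS}_{g^{-1}}$ without changing its value. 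Joint concavity of the fidelity then lets me pull the average inside both arguments; since the second argument $\tau^R\otimes\sigma^S$ is independent of $g$, this produces $F(\mathcal{R}_{\text{sym}}^{R\rightarrow RS}(\tau^R),\tau^R\otimes\sigma^S)$ as an \emph{upper} bound on the average. Combining the two displays yields $2^{-\Delta\Gamma/2}\le F(\mathcal{R}_{\text{sym}}^{R\rightarrow RS}(\tau^R),\tau^R\otimes\sigma^S)$.

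It then remains to check that the channel $\mathcal{E}^{R\rightarrow RS}=\mathcal{R}_{\text{sym}}^{R\rightarrow RS}$ satisfies the covariance condition Eq.~(\ref{cov}). This follows from a change of summation variable: composing $\mathcal{R}_{\text{sym}}^{R\rightarrow RS}=\frac{1}{|G|}\sum_{g}\mathcal{U}^{RS}_{g^{-1}}\circ\mathcal{R}^{R\rightarrow RS}\circ\mathcal{U}^R_g$ with $\mathcal{U}^R_h$ on the right and relabeling $g\mapsto gh^{-1}$ (using that $\mathcal{U}_g$ is a representation, so $\mathcal{U}_{hg^{-1}}=\mathcal{U}_h\circ\mathcal{U}_{g^{-1}}$) reproduces $\mathcal{U}^{RS}_h\circ\mathcal{R}_{\text{sym}}^{R\rightarrow RS}$, which is exactly Eq.~(\ref{cov}) with $X=R$, $Y=RS$ (the finite sum being replaced by the Haar integral for continuous $G$). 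Together with the previous display this gives Eq.~(\ref{abs}) and completes the argument.

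The step requiring the most care is making sure the concavity estimate runs in the direction that preserves the \emph{lower} bound: one needs the average block fidelity to be bounded \emph{above} by the fidelity of the symmetrized map, so that the Fawzi--Renner lower bound survives the averaging and transfers to $\mathcal{E}^{R\rightarrow RS}$. Conceptually, though, the genuine content sits entirely in the construction of $\Sigma^{CRS}$ and in Eq.~(\ref{eq1}), i.e.\ the observation that appending the dephased classical register $C$ turns the increase $\Delta\Gamma$ of the relative entropy of asymmetry into a conditional mutual information; this is the creative leap that converts a generic recoverability statement into a statement about covariant catalysis, and once it is in hand the remaining manipulations are routine.
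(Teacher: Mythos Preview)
Your proposal is correct and follows the paper's argument essentially step for step: the paper constructs $\Sigma^{CRS}$, invokes Eq.~(\ref{eq1}) and the Fawzi--Renner bound Eq.~(\ref{FR}), expands the tripartite fidelity via the orthogonality of $\{\ket{g}^C\}$ and unitary invariance, applies concavity to pass to the symmetrized map $\mathcal{R}_{\text{sym}}^{R\rightarrow RS}$, and then observes that this map is covariant---exactly the chain you describe. Your explicit change-of-variable verification of the covariance condition is a small elaboration beyond what the paper spells out, but otherwise the two arguments coincide.
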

Indeed, as we explain in the Supplementary Material (See Sec. \ref{sec:reversibility}), it follows from the results of \cite{wilde2015recoverability, berta2015renyi} that the covariant map $\mathcal{E}^{R\rightarrow RS}$  can be chosen to be the Petz Recovery map, up to some covariant unitary rotations in the input and output, i.e.
\beq
\mathcal{E}^{R\rightarrow RS}=\mathcal{V}_{-t}^{RS}\circ\mathcal{R}^{R\rightarrow RS}_P\circ \mathcal{V}_t^{R}\ ,
\eeq
where $\mathcal{R}^{R\rightarrow RS}_P$ is the Petz recovery map
\begin{align}\label{Petz}
&\mathcal{R}^{R\rightarrow RS}_P(\omega^R)=\nonumber\\
&\sqrt{\mathcal{G}(\tau^R\otimes\sigma^S)}(\frac{1}{\sqrt{\mathcal{G}(\tau^R)}}\omega^R\frac{1}{\sqrt{\mathcal{G}(\tau^R)}} \otimes I^S)   \sqrt{\mathcal{G}(\tau^R\otimes\sigma^S)}\ .
\end{align}
Here $I^S$ is the identity operator on system $S$, and the inverse of $\sqrt{\mathcal{G}(\tau^R)}$ is defined only on the support of this operator.  Furthermore, $\mathcal{V}_t^{R}$ and $\mathcal{V}_{-t}^{RS}$ are  unitary covariant  channels acting on the input and output systems $R$ and $RS$ respectively, defined by
\bes
\begin{align}
\mathcal{V}_t^{R}(\omega^R)&\equiv[\mathcal{G}(\tau^R)]^{it}\ \omega^R\  [\mathcal{G}(\tau^R)]^{-it} ,  \\
\mathcal{V}_{-t}^{RS}(\omega^{RS})&\equiv[\mathcal{G}(\tau^{RS})]^{-it}\ \omega^{RS}\ [\mathcal{G}(\tau^{RS})]^{it}\ 
\end{align}
\ees
 for some unknown $t\in\mathbb{R}$. Note that for any  $t\in\mathbb{R}$ the support of state $\mathcal{V}_t^{R}(\tau^R)$ is contained in the support of $\sqrt{\mathcal{G}(\tau^R)}$, and therefore the action of $\mathcal{R}^{R\rightarrow RS}_P$ is well-defined on this state.

 This theorem together with upper bound in Eq.(\ref{lower}) imply that the quantity $\Delta \Gamma$ mainly determines weather the catalytic transition in Eq.(\ref{main}) can be implemented approximately or not.  Note that because of the monotonicity of fidelity under partial trace, after applying the map $\mathcal{E}^{R\rightarrow RS}$ the fidelity of the reduced state of system $S$ with its desired state $\sigma^{S}$ is larger than or equal to both sides of Eq.(\ref{abs}).

In the following we consider some examples of applications of this result.

\section{Applications}\label{Sec:app}

\subsection{State preparation using quantum clocks}

As a simple model for a clock we consider a harmonic oscillator with Hamiltonian $H^R=\omega (1/2+N)$, where $N=\sum_{n=0}^\infty n |n\rangle\langle n|$ is the number operator, and $\omega$ is the frequency of oscillation. 
Assume the system $S$ is another harmonic oscillator with the same frequency, which is initially in a time-independent state,   such as the thermal state. Suppose we want to prepare the system  $S$ in the time-dependent state $|+^S\rangle=(|0\rangle+|1\rangle)/\sqrt{2}$ by coupling it to the clock $R$ (Note that $|0\rangle$ and $|1\rangle$ are, respectively, the ground and the first excited states of the harmonic oscillator.). 

We consider two different initial states of clock, i.e. 
\beq
|\psi^R_{k}\rangle=\frac{1}{\sqrt{T}}\sum_{n=0}^{T-1} |k n\rangle , \ \ \ \  k=1,2\ .
\eeq
Note that $\Gamma(|\psi^R_{1,2}\rangle)=\log T$ and $\Gamma(|+^S\rangle)=1$. Therefore, for large enough $T$,   $\Gamma(|\psi^R_{1,2}\rangle)$ could be  arbitrary larger than $\Gamma(|+^S\rangle)$. Despite this, it turns out that  using state $|\psi^R_2\rangle$ and operations which are covariant under time translations we can never prepare $S$ in a state close to $|+^S\rangle$, that is a state whose fidelity with the desired state $|+^S\rangle$ is larger than the fidelity of the time-independent state $|0^S\rangle$ with this state.\footnote{ This can be seen, e.g., using the fact that the input $|\psi^R_2\rangle$ is invariant under the time translation $e^{-i H^R \pi/\omega}$, and therefore for any time-invariant channel the output should also be invariant under this time translation. But under the time translation $e^{i H^S \pi/\omega}$ the state $|+^S\rangle$ is mapped to an orthogonal state. It follows that regardless of how large is $T$, with a clock in state $|\psi^R_2\rangle$ we can never prepare state $|+^S\rangle$, or something close to it.}  On the other hand, having the clock $R$ in  state $|\psi^R_{1}\rangle$ we can prepare state $|\psi^R_{1}\rangle|+^S\rangle$ with fidelity $1-2/T$ \cite{marvian2008building}. This can also be shown  using theorem \ref{Thm}:  first, note that 
\beq
\Gamma(|\psi^R_{1}\rangle|+^S\rangle)=\log T+\frac{1}{T}\ ,\ \ \   \Gamma(|\psi^R_{2}\rangle|+^S\rangle)=\log T+1 \nonumber .
\eeq
Therefore, while  in the transition $|\psi^R_{2}\rangle\rightarrow |\psi^R_{2}\rangle|+^S\rangle$ the increase in $\Gamma$ remains independent of $T$, that is $\Delta \Gamma=1$, in the transition   $|\psi^R_{1}\rangle\rightarrow |\psi^R_{1}\rangle|+^S\rangle$ the increase in $\Gamma$ is $1/T$, and hence vanishes for large $T$. Using theorem \ref{Thm} we conclude that there exists a quantum channel which is invariant under time translations and  implements the latter transition  with fidelity larger than or equal to $2^{-1/(2T)}$, which   for $T\gg1$ is $\approx 1- 1/(2 T \log e)$.

\subsection{Petz recovery map as the optimal cloner}
As the next example, we consider the application of our result in the case of cloners. Consider a cloner that receives $n$ copies of an unknown state $\psi\in \mathbb{C}^d$, and generates an output in state close to $n+k$ copies of this state. As we saw before, the optimal cloner can be chosen to satisfy the covariance condition in Eq.(\ref{cloner}). Therefore, in this case the relevant symmetry is $G=\text{U}(d)$, the group of unitaries acting on $\mathbb{C}^d$. Then, using the fact that for any integer $r$, $U^{\otimes r}$ acts irreducibly on the symmetric subspace of  $(\mathbb{C}^d)^{\otimes r}$, we find that $\mathcal{G}(\psi^{\otimes r})= \Pi_+^{(r)}/d_+(r)$, where $\Pi_+^{(r)}$ is the projector to the symmetric subspace, whose dimension is $d_+(r)={{r+d-1}\choose{d-1}}$. It follows that in the desired transformation $\psi^{\otimes n}\rightarrow \psi^{\otimes (n+k)}$, the increase in the relative entropy of asymmetry is $\Delta \Gamma=\log \frac{d_+(n+k)}{d_+(n)}$. Therefore, by virtue of theorem \ref{Thm}, we find that this state transition can be implemented   using a covariant channel, with fidelity  larger than or equal to $\sqrt{\frac{d_+(n)}{d_+(n+k)}}$. This is exactly the optimal fidelity of cloning, as  shown by Werner \cite{werner1998optimal}.  

Furthermore, because $\mathcal{G}(\psi^{\otimes n})$ and $\mathcal{G}(\psi^{\otimes (n+k)})$ are both completely mixed states on the symmetric subspaces of the input and output Hilbert spaces, it follows that for any input state  $\psi^{\otimes n}$, the unitary channels  $\mathcal{V}_t^{R}$ and $\mathcal{V}_{-t}^{RS}$ act trivially,  and therefore  the covariant channel that achieves this fidelity is the Petz recovery map in Eq.(\ref{Petz}) itself, i.e.  
\begin{align*}
\mathcal{E}^{n\rightarrow n+k}(\psi^{\otimes n})&=\frac{d_+(n)}{d_+(n+k)} \Pi_+^{(n+k)}[|\psi\rangle\langle\psi|^{\otimes n} \otimes I^{\otimes k}]\Pi_+^{(n+k)} .
\end{align*}
Interestingly, this is  exactly the optimal cloning map which was originally found by Werner \cite{werner1998optimal}, and maximizes the fidelity of cloning.


\section{Reversibility} \label{Sec:revers}
The problem of catalytic transitions under covariant transformations is indeed closely related to a more general problem, which is of independent interest: How well we can reverse the effect of a covariant channel on a given state using only covariant channels? Roughly speaking, one expects that because asymmetry is the resource that cannot be generated by covariant channels,  if under the effect of the first covariant channel the amount of asymmetry of state does not decrease considerably, then the process should be approximately reversible. As we show in the Supplementary Material, using the results of \cite{fawzi2015quantum, wilde2015recoverability, berta2015renyi},  this intuition is indeed correct: 
\begin{theorem}\label{Thm2}
Suppose under the action of a covariant channel $\mathcal{F}^{X\rightarrow Y}$ the relative entropy of asymmetry of a given state $\rho^X$ drops by  $\Delta \Gamma=\Gamma(\rho^X)-\Gamma(\mathcal{F}^{X\rightarrow Y}(\rho^X))$. Then there exists a covariant channel $\mathcal{R}^{Y\rightarrow X}$, such that $F(\mathcal{R}^{Y\rightarrow X}\circ\mathcal{F}^{X\rightarrow Y}(\rho^X), \rho^X)\ge 2^{-\Delta \Gamma/2}$. 
\end{theorem}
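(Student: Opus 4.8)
The plan is to recognize $\Delta\Gamma$ as the deficit in the data-processing inequality for the relative entropy, and then to invoke the strengthened (remainder) form of that inequality. First I would rewrite both asymmetries using Eq.~(\ref{rel}): $\Gamma(\rho^X)=S(\rho^X\|\mathcal{G}^X(\rho^X))$ and $\Gamma(\mathcal{F}^{X\to Y}(\rho^X))=S(\mathcal{F}^{X\to Y}(\rho^X)\|\mathcal{G}^Y(\mathcal{F}^{X\to Y}(\rho^X)))$. The key observation is that a covariant channel commutes with twirling, $\mathcal{G}^Y\circ\mathcal{F}^{X\to Y}=\mathcal{F}^{X\to Y}\circ\mathcal{G}^X$, which is immediate from Eq.~(\ref{cov}) and Eq.~(\ref{twr}). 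Writing $\sigma^X\equiv\mathcal{G}^X(\rho^X)$, this gives $\Gamma(\mathcal{F}^{X\to Y}(\rho^X))=S(\mathcal{F}^{X\to Y}(\rho^X)\|\mathcal{F}^{X\to Y}(\sigma^X))$, so that $\Delta\Gamma=S(\rho^X\|\sigma^X)-S(\mathcal{F}^{X\to Y}(\rho^X)\|\mathcal{F}^{X\to Y}(\sigma^X))$ is exactly the amount by which the relative entropy of $\rho^X$ and its symmetrized version $\sigma^X$ decreases under $\mathcal{F}^{X\to Y}$.

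Second, I would apply the universal recoverability theorem of \cite{fawzi2015quantum} in its relative-entropy form, as sharpened in \cite{wilde2015recoverability, berta2015renyi}: for a channel $\mathcal{N}$ and states $\rho,\sigma$ there exists a recovery map $\mathcal{R}$, namely a (rotated) Petz map for $\mathcal{N}$ with respect to $\sigma$, satisfying $S(\rho\|\sigma)-S(\mathcal{N}(\rho)\|\mathcal{N}(\sigma))\ge -2\log F(\rho,\mathcal{R}\circ\mathcal{N}(\rho))$. Taking $\mathcal{N}=\mathcal{F}^{X\to Y}$, $\rho=\rho^X$, and $\sigma=\sigma^X$ yields a map $\mathcal{R}^{Y\to X}$ with $\Delta\Gamma\ge -2\log F(\mathcal{R}^{Y\to X}\circ\mathcal{F}^{X\to Y}(\rho^X),\rho^X)$, which rearranges into the claimed bound $F(\mathcal{R}^{Y\to X}\circ\mathcal{F}^{X\to Y}(\rho^X),\rho^X)\ge 2^{-\Delta\Gamma/2}$.

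The main obstacle, and the only point requiring genuine care, is that $\mathcal{R}^{Y\to X}$ must be \emph{covariant}. I expect this to follow from the fact that the reference state $\sigma^X=\mathcal{G}^X(\rho^X)$ is $G$-invariant by construction, and so is its image $\mathcal{F}^{X\to Y}(\sigma^X)=\mathcal{G}^Y(\mathcal{F}^{X\to Y}(\rho^X))$. Consequently every building block of the rotated Petz map — the powers $(\sigma^X)^{s}$ and $\mathcal{F}^{X\to Y}(\sigma^X)^{s}$ for real or imaginary $s$ — commutes with each $U_g$, while the adjoint $(\mathcal{F}^{X\to Y})^\dagger$ is covariant whenever $\mathcal{F}^{X\to Y}$ is. A direct computation then gives $\mathcal{R}^{Y\to X}\circ\mathcal{U}^Y_g=\mathcal{U}^X_g\circ\mathcal{R}^{Y\to X}$ for all $g$, and averaging over the rotation parameter $t$ (against a fixed probability density) preserves this covariance. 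As a fallback that avoids tracking the Petz structure, one can symmetrize an arbitrary recovery map via $\mathcal{R}_{\text{sym}}=\frac{1}{|G|}\sum_{g}\mathcal{U}^X_{g^{-1}}\circ\mathcal{R}\circ\mathcal{U}^Y_g$ and invoke concavity of fidelity, exactly as in the derivation preceding Theorem~\ref{Thm}.

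Finally I would remark that Theorem~\ref{Thm} is recovered as the special case $\mathcal{F}^{X\to Y}=\Tr_S$, the partial trace, which is covariant: with $X=RS$, $Y=R$, and $\rho^X=\tau^R\otimes\sigma^S$, one has $\mathcal{F}^{X\to Y}(\rho^X)=\tau^R$, the drop $\Delta\Gamma=\Gamma(\tau^R\otimes\sigma^S)-\Gamma(\tau^R)$ matches, and the covariant recovery map $\mathcal{R}^{R\to RS}$ reduces to the Petz map of Eq.~(\ref{Petz}), so the bound collapses to Eq.~(\ref{abs}).
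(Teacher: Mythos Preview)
Your proposal is correct and follows essentially the same route as the paper's proof: both rewrite $\Delta\Gamma$ as the relative-entropy deficit $S(\rho^X\|\mathcal{G}(\rho^X))-S(\mathcal{F}(\rho^X)\|\mathcal{F}\circ\mathcal{G}(\rho^X))$ using $\mathcal{F}\circ\mathcal{G}=\mathcal{G}\circ\mathcal{F}$, invoke the rotated-Petz recoverability bound of \cite{wilde2015recoverability, berta2015renyi}, and then verify covariance of the recovery map from the $G$-invariance of $\mathcal{G}(\rho^X)$ and $\mathcal{F}\circ\mathcal{G}(\rho^X)$ together with the covariance of $\mathcal{F}^\dagger$. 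The paper uses the version of the bound in which a single $t\in\mathbb{R}$ works (rather than averaging over $t$), but this is immaterial; your symmetrization fallback is also sound though unnecessary here.
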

Note that choosing $X=RS$, $Y=R$, and the map $\mathcal{F}^{X\rightarrow Y}$  to be the partial trace over system $S$, we can obtain theorem \ref{Thm} as a special case. In the Supplementary Material we also present similar results in the context of the resource theory of (speakable) coherence, in terms of dephasing-covariant channels \cite{marvian2016quantify, chitambar2016incoherent}. See also \cite{wehner2015work} for similar results on reversibility in the context of the resource  theory of thermodynamics.

\section{Conclusion}

One of the advantages of using the relatively abstract language of symmetry to understand physical phenomena, is that it clarifies similarities in seemingly different problems in different contexts. Hence,  intuitions from one physical phenomenon can be applied to another problem in another context.  The resource theory of asymmetry, whose central question is what \emph{can} and what \emph{cannot} be done with symmetric time evolutions, inherits this advantage.  Formalizing a result in the language of this resource theory, can help to clarify its applications in a wide range of problems, from clocks to cloners.

\bibliography{Ref_v1}

\onecolumngrid

\newpage

\maketitle
\vspace{-5in}
\begin{center}

\Large{\textbf{Supplementary Material}}
\end{center}
\appendix

\section{Reversibility under covariant operations}\label{sec:reversibility}
In this section we prove theorem \ref{Thm2},  which indeed can be thought as a generalization of theorem \ref{Thm}. 
Theorem  \ref{Thm2} is basically a corollary of the following result from \cite{wilde2015recoverability, berta2015renyi}, which is a  generalization of Fawzi and Renner original result:


\begin{theorem}\label{FRW}(From \cite{wilde2015recoverability, berta2015renyi})
Consider a channel $\mathcal{N}$ and a pair of state $\rho$ and $\kappa$ where the support of $\rho$ is contained in the support of $\kappa$. Then, there exists a recovery channel  $\mathcal{R}_\kappa$ which maps $\mathcal{N}(\kappa)$ to $\kappa$ and 
\beq
S(\rho\|\kappa)- S(\mathcal{N}(\rho)\| \mathcal{N}(\kappa)\ge -2\log F(\mathcal{R}_\kappa\circ\mathcal{N}(\rho), \rho)\ ,
\eeq
This cannel can be chosen to be
\beq\label{Rec_gen}
\mathcal{R}_{\kappa}= \mathcal{V}_{\kappa,t} \circ\mathcal{R}_{\kappa,P}\circ \mathcal{V}_{\mathcal{N}(\kappa), -t}
\eeq
for some $t\in\mathbb{R}$, where $\mathcal{R}_{\kappa,P}$ is the Petz recovery map
\beq\label{Petz_app}
\mathcal{R}_{\kappa,P}(\rho)=\sqrt{\kappa}\ \mathcal{N}^\dag\big(\frac{1}{\sqrt{\mathcal{N}(\kappa)}}\ \rho\ \frac{1}{\sqrt{\mathcal{N}(\kappa)}}\big)\sqrt{\kappa}\ .
\eeq
and 
\beq
\mathcal{V}_{\omega,t}(\cdot)=\omega^{i t}(\cdot) \omega^{-i t}\ . 
\eeq
\end{theorem}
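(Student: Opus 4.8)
The plan is to derive this as a \emph{strengthened data-processing inequality}, lifting the ordinary relative entropy to the family of sandwiched Rényi relative entropies $\tilde D_\alpha(\rho\|\sigma)=\frac{1}{\alpha-1}\log\Tr[(\sigma^{\frac{1-\alpha}{2\alpha}}\rho\,\sigma^{\frac{1-\alpha}{2\alpha}})^\alpha]$ and interpolating between two distinguished values of $\alpha$. The two anchors are $\alpha=1$, where $\tilde D_1(\rho\|\sigma)=S(\rho\|\sigma)$ recovers the quantity on the left-hand side, and $\alpha=\tfrac12$, where $\tilde D_{1/2}(\rho\|\sigma)=-2\log F(\rho,\sigma)$ produces exactly the fidelity on the right-hand side. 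Since $\tilde D_\alpha$ is monotone under the channel $\mathcal{N}$ for every $\alpha\in[\tfrac12,\infty)$, the difference $\Delta_\alpha:=\tilde D_\alpha(\rho\|\kappa)-\tilde D_\alpha(\mathcal{N}(\rho)\|\mathcal{N}(\kappa))$ is nonnegative; the content of the theorem is that its $\alpha\to1$ value dominates the $\alpha=\tfrac12$ fidelity of a specific recovery, so the whole argument is about controlling $\Delta_\alpha$ across the interval in $\alpha$.

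First I would record the built-in consistency of the claimed map: because each rotation $\mathcal{V}_{\omega,t}$ fixes $\omega$ (as $\omega^{it}\omega\,\omega^{-it}=\omega$) and the Petz map satisfies $\mathcal{R}_{\kappa,P}(\mathcal{N}(\kappa))=\kappa$ whenever $\mathcal{N}$ is trace preserving (so that $\mathcal{N}^\dag$ is unital, forcing $\sqrt{\kappa}\,\mathcal{N}^\dag(\Pi_{\mathcal{N}(\kappa)})\sqrt{\kappa}=\kappa$), the composite $\mathcal{R}_\kappa$ automatically sends $\mathcal{N}(\kappa)$ to $\kappa$ for every $t$; the support hypothesis that $\rho$ is supported in $\kappa$ guarantees that the inverse square roots in $\mathcal{R}_{\kappa,P}$ act only on supports and that $\tilde D_\alpha(\rho\|\kappa)$ is finite. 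Then the heart of the proof is a complex-interpolation argument: I would build an operator-valued function $G(z)$, holomorphic on the strip $0\le\mathrm{Re}\,z\le1$, whose boundary values on the two edges reproduce, respectively, the Rényi divergence difference near $\alpha=1$ and a fidelity expression for a recovery of the form $\kappa^{1/2+it}\mathcal{N}^\dag(\cdots)\kappa^{1/2-it}$. Applying the Hadamard three-lines theorem --- or, for the optimal constant, Hirschman's refinement --- to $\log\|G(z)\|$ converts a bound valid on one edge into the desired bound on the central line $\mathrm{Re}\,z=\tfrac12$; crucially, the imaginary coordinate of the interpolation variable is precisely the rotation angle, so the $\kappa^{it}$ and $\mathcal{N}(\kappa)^{\pm it}$ factors of the rotated Petz map emerge directly from the interpolation rather than being inserted by hand.

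To close, I would pass to the $\alpha\to1$ limit: writing $\Delta_\alpha=\Delta_1+O(\alpha-1)$ and comparing with the interpolation bound, the leading behaviour extracts $S(\rho\|\kappa)-S(\mathcal{N}(\rho)\|\mathcal{N}(\kappa))$ on the left and $-2\log F(\mathcal{R}_\kappa\circ\mathcal{N}(\rho),\rho)$ on the right, with the surviving value of the interpolation parameter fixing the rotation $t$. I expect the main obstacle to be the interpolation step itself: controlling the boundary norms of $G(z)$ requires delicate handling of operator orderings, since $\rho$, $\kappa$, and the $\mathcal{N}(\kappa)$-weights do not commute, and one must verify the requisite analyticity and uniform boundedness on the strip to legitimately invoke the three-lines theorem. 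A secondary subtlety is that this argument yields $t$ only \emph{existentially}, as the point where the boundary estimate is saturated, which is exactly why the statement asserts the bound merely for some $t\in\mathbb{R}$ rather than for an explicit rotation; an alternative route giving a universal, $\rho$-independent recovery would instead average the rotated Petz maps over $t$ against a fixed weight, at the cost of a less explicit-looking map.
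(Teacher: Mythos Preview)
The paper does not prove this theorem at all: it is stated as a result \emph{from} \cite{wilde2015recoverability, berta2015renyi} and is used as a black box to derive Theorems~\ref{Thm} and~\ref{Thm2}. There is therefore no proof in the paper to compare your proposal against.

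That said, your sketch is an accurate outline of the argument actually appearing in the cited references. The route via sandwiched R\'enyi relative entropies, with the anchors $\alpha=1$ (relative entropy) and $\alpha=\tfrac12$ (log-fidelity), together with the Hadamard three-lines/Hirschman interpolation on the strip, is precisely the mechanism used there; and you correctly identify that the rotation parameter $t$ arises as the imaginary part of the interpolation variable, which is why the theorem only asserts existence of some $t\in\mathbb{R}$. Your remark that averaging over $t$ against a fixed probability weight yields a universal (state-independent) recovery map is also the content of the follow-up work \cite{sutter2016strengthened}. The consistency check that $\mathcal{R}_\kappa(\mathcal{N}(\kappa))=\kappa$ for every $t$ is correct and straightforward. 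The only caveat is that your description remains at the level of a plan: the construction of the analytic operator-valued function $G(z)$ with the right boundary behaviour, and the verification of boundedness on the strip needed for three-lines, are the genuine technical content and are not spelled out here---but since the paper itself simply imports the result, this is not a deficiency relative to what the paper does.
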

Note that $\mathcal{N}^\dag$ is the adjoint of $\mathcal{N}$ defined via $\Tr(\mathcal{N}^\dag(X) Y)=\Tr(X \mathcal{N}(Y))$.

\begin{proof} (Theorem \ref{Thm2})

Suppose  in theorem \ref{FRW} we choose  channel $\mathcal{N}$ to be $\mathcal{F}^{X\rightarrow Y}$. In the following we suppress the superscripts $X$ and $Y$. By assumption this channel is G-covariant, i.e. satisfies  
\beq\label{cov-app}
\mathcal{F}\circ \mathcal{U}_g=\mathcal{U}_g\circ \mathcal{F}\ \ : \ \forall g\in G\ .
\eeq
Averaging\footnote{In the case of the group of translations $\{e^{-i H t}:t\in \mathbb{R}\}$ we can average over a finite interval $[-T,T]$, and then look at the limit of $T\rightarrow \infty$. } over group $G$ we find
\beq\label{qwer}
\mathcal{F}\circ \mathcal{G}= \mathcal{G}\circ \mathcal{F}\ .
\eeq
This implies
\begin{align}
S(\rho\|\mathcal{G}(\rho))- S(\mathcal{F}(\rho)\| \mathcal{F}\circ\mathcal{G}(\rho) )&=S(\rho\|\mathcal{G}(\rho))- S(\mathcal{F}(\rho)\| \mathcal{G}\circ\mathcal{F}(\rho) )\\ &= \Gamma(\rho)-\Gamma(\mathcal{F}(\rho))\equiv \Delta \Gamma\ ,
\end{align}
where the first equality follows from Eq.(\ref{qwer}). Therefore, from theorem \ref{FRW} we know  the  recovery channel $\mathcal{R}_\kappa$ defined by Eq.(\ref{Rec_gen}) satisfies 
\beq
\Delta\Gamma= S(\rho\|\mathcal{G}(\rho))- S(\mathcal{F}(\rho)\| \mathcal{F}\circ \mathcal{G}(\rho)\ge -2\log F(\mathcal{R}_\kappa\circ\mathcal{F}(\rho), \rho)\ ,
\eeq
where $\kappa=\mathcal{G}(\rho)$.  Next, we show that if $\mathcal{F}$ is G-Covariant then the recovery channel 
\beq
\mathcal{R}_{\kappa}= \mathcal{V}_{\kappa,t} \circ\mathcal{R}_{\kappa,P} \circ \mathcal{V}_{\mathcal{F}(\kappa), -t}\ ,
\eeq
for $\kappa=\mathcal{G}(\rho) $ will also be G-Covariant. First, note that $G$-covariance of $\mathcal{F}$ implies  $G$-covariance of $\mathcal{F}^\dag$. This can be seen by considering the adjoint  of both sides of Eq.(\ref{cov-app}), which implies
\beq
\mathcal{U}_{g^{-1}}\circ \mathcal{F}^\dag=\mathcal{F}^\dag\circ \mathcal{U}_{g^{-1}} \ : \ \forall g\in G\ ,
\eeq
and hence $\mathcal{U}_{g}\circ \mathcal{F}^\dag=\mathcal{F}^\dag\circ \mathcal{U}_{g}$ for all $g\in G$. Then, we note that both states $\kappa=\mathcal{G}(\rho)$ and  $\mathcal{F}(\kappa)=\mathcal{F}\circ\mathcal{G}(\rho)= \mathcal{G}\circ \mathcal{F}(\rho)$ are G-invariant. It follows that all the channels 
\beq\label{ew1}
\mathcal{V}_{\kappa,t}(\omega)=[\mathcal{G}(\rho)]^{it} (\omega) [\mathcal{G}(\rho)]^{-it} 
\eeq
and
\beq\label{ew2}
\mathcal{V}_{\mathcal{F}(\kappa),-t}(\omega)= [\mathcal{F}\circ\mathcal{G}(\rho)]^{-it} (\omega) [\mathcal{F}\circ\mathcal{G}(\rho)]^{it} 
\eeq
and
\beq
\mathcal{R}_{\kappa,P}(\omega)=\sqrt{ \mathcal{G}(\rho)}\ \mathcal{F}^\dag\big(\frac{1}{\sqrt{ \mathcal{G}\circ\mathcal{F}(\rho)}}\ \omega\ \frac{1}{\sqrt{ \mathcal{G}\circ\mathcal{F}(\rho)}}\big)\sqrt{ \mathcal{G}(\rho)}
\eeq
 are G-covariant, and so is their combination 
 \beq\label{explicit}
 \mathcal{R}_{\kappa}= \mathcal{V}_{\kappa,t} \circ\mathcal{R}_{\kappa,P} \circ \mathcal{V}_{\mathcal{F}(\kappa), -t}\ .
 \eeq
  This completes the proof.
 
\end{proof}


\section{General upper bound on accuracy (proof of Eq.(\ref{lower}))}

Recall the Fannes-Audenaert  \cite{Audenaert:07} inequality  
\beq\label{Fan}
|S(\rho_1)-S(\rho_2)|\le \log D \times \frac{ \|\rho_1-\rho_2\|_1}{2}+H(\frac{ \|\rho_1-\rho_2\|_1}{2})\ ,
\eeq
where  $\rho_1$ and $\rho_2$ are arbitrary pair of density operators in a $D$-dimensional Hilbert space, and $H(x)=-x\log x-(1-x)\log(1-x)$.  Using the fact the uniform twirling $\mathcal{G}$ is a quantum channel, and the trace norm is non-increasing under quantum channels, we have 
\beq
\|\mathcal{G}(\rho_1)-\mathcal{G}(\rho_2)\|_1\le \|\rho_1-\rho_2\|_1\ .
\eeq 
Then, assuming $\|\rho_1-\rho_2\|_1\le 1$, and using the fact that function $H$ is monotonically increasing in the interval $(0,1/2)$ we find 
\bes
\begin{align}
|S(\mathcal{G}(\rho_1))-S(\mathcal{G}(\rho_2))|&\le \log D \times \frac{ \|\mathcal{G}(\rho_1)-\mathcal{G}(\rho_2)\|_1}{2}+H(\frac{ \|\mathcal{G}(\rho_1)-\mathcal{G}(\rho_2)\|_1}{2})\ \\ &\le \log D \times  \frac{\|\rho_1-\rho_2\|_1}{2}+H(\frac{ \|\rho_1-\rho_2\|_1}{2})\ .
\end{align}
\ees
This bound together  with bound (\ref{Fan}) and the triangle inequality imply
\bes\label{gen}
\begin{align}
\Gamma(\rho_1)-\Gamma(\rho_2)&=[S(\mathcal{G}(\rho_1))-S(\rho_1)]-[S(\mathcal{G}(\rho_2))-S(\rho_2)]\\ &\le |S(\rho_2)-S(\rho_1)|+|S(\mathcal{G}(\rho_1))-S(\mathcal{G}(\rho_2))|  \\ &\le \log D \times  \|\rho_1-\rho_2\|_1+2 H(\frac{ \|\rho_1-\rho_2\|_1}{2})
\end{align}
\ees
Applying this bound for $\rho_1=\omega^Y$ and $\rho_2=\mathcal{E}^{X\rightarrow Y}(\rho^X)$, and using the fact that since  $\mathcal{E}^{X\rightarrow Y}$ is G-Covariant then  $\Gamma(\mathcal{E}^{X\rightarrow Y}(\rho^X))\le \Gamma(\rho^X)$,   we find 
\beq
\Gamma(\omega^Y)-\Gamma(\rho^X)  \le \Gamma(\omega^Y)-\Gamma(\mathcal{E}^{X\rightarrow Y}(\rho^X)\le \log D^\ast_Y \times  \|\omega^Y-\mathcal{E}^{X\rightarrow Y}(\rho^X) \|_1+2 H(\frac{ \|\omega^Y-\mathcal{E}^{X\rightarrow Y}(\rho^X) \|_1}{2})\ ,
\eeq
where $D^\ast_Y$ is the dimension of system $Y$. This is a weaker version of Eq.(\ref{lower}), because in general, $D_Y\le D_Y^\ast$, that is the rank of $\mathcal{G}(\omega^Y)$ is less than or equal to the dimension of system $Y$. Next, we prove bound (\ref{lower}).

Let $\Pi$ be the projector to the support of $\mathcal{G}(\omega^Y)$. Using the fact that  $\mathcal{G}(\tau^Y)$ commutes with $U^Y(g)$, for all $g\in G$, we find that $\Pi$ also commutes with  $U^Y(g)$, for all $g\in G$. Define the channel 
\beq
\mathcal{L}^{Y\rightarrow Y}(\omega^Y)=\Pi\ \omega^Y\ \Pi+\Tr(\Pi\omega^Y) \frac{\Pi}{\Tr(\Pi)}\ . 
\eeq
This is the quantum channel which projects any input state to a state restricted to the support of $\mathcal{G}(\omega^Y)$, and if the input state is found to be outside this subspace, then it prepares the totally mixed state in the support of  $\mathcal{G}(\omega^Y)$. Using the fact that $\Pi$  commutes with  $U^Y(g)$ for all $g\in G$ we can easily see that $\mathcal{L}^{Y\rightarrow Y}$ is a covariant channel. This implies 
 \begin{align*}
\Gamma(\omega^Y)-\Gamma(\rho^X)&\le \Gamma(\omega^Y)-\Gamma(\mathcal{E}^{X\rightarrow Y}(\rho^X))\\ &\le \Gamma(\omega^Y)-\Gamma(\mathcal{L}^{Y\rightarrow Y}\circ\mathcal{E}^{X\rightarrow Y}(\rho^X))\ ,
\end{align*}
where both bounds follow  from the monotonicity of $\Gamma$ under covariant channels.  Next, note that the support of both density operators $\omega^Y$ and $\mathcal{L}^{Y\rightarrow Y}\circ\mathcal{E}^{X\rightarrow Y}(\rho^X)$ are contained in the support of $\mathcal{E}^{X\rightarrow Y}(\rho^X)$, whose dimension is denoted by $D_Y$. Therefore, we can use Eq.(\ref{gen}) with $D=D_Y$. Then, we find
\bes
\begin{align}
\Delta\Gamma&\equiv \Gamma(\omega^Y)-\Gamma(\rho^X)\\ &\le \Gamma(\omega^Y)-\Gamma(\mathcal{E}^{X\rightarrow Y}(\rho^X))\\ &\le \Gamma(\omega^Y)-\Gamma(\mathcal{L}^{Y\rightarrow Y}\circ\mathcal{E}^{X\rightarrow Y}(\rho^X) )\\ &\le  \log D_Y \times  \|\omega^Y-\mathcal{L}^{Y\rightarrow Y}\circ\mathcal{E}^{X\rightarrow Y}(\rho^X) \|_1+2 H(1/2\times\|\omega^Y-\mathcal{L}^{Y\rightarrow Y}\circ\mathcal{E}^{X\rightarrow Y}(\rho^X))
\\ &=  \log D_Y \times  \|\mathcal{L}^{Y\rightarrow Y}(\omega^Y)-\mathcal{L}^{Y\rightarrow Y}\circ\mathcal{E}^{X\rightarrow Y}(\rho^X) \|_1+2 H(1/2\times\|\mathcal{L}^{Y\rightarrow Y}\circ(\omega^Y)-\mathcal{L}^{Y\rightarrow Y}\circ\mathcal{E}^{X\rightarrow Y}(\rho^X)\|_1)
\\ &\le  \log D_Y \times  \|\omega^Y-\mathcal{E}^{X\rightarrow Y}(\rho^X)\|_1+2 H(1/2\times \|\omega^Y-\mathcal{E}^{X\rightarrow Y}(\rho^X)\|_1)\ ,
\end{align}
\ees
where to get the fourth line we have used Eq.(\ref{gen}), and to get the fifth line we have used  the fact that the support of $\omega^Y$ is contained in the support of $\mathcal{G}(\omega^Y)$, and therefore the channel $\mathcal{L}^{Y\rightarrow Y}$ leaves $\omega^Y$ invariant, and to get the last line we have used monotonicity of the trace distance under quantum channels. This completes the proof of Eq.(\ref{lower}).

\section{Recoverability for  dephasing-covariant channels}
In section \ref{sec:reversibility}, we used theorem  \ref{FRW} to prove that if under a covariant channel the relative entropy of asymmetry does not drop considerably, then the process can be approximately reversed using a covariant channel. In this section we use theorem  \ref{FRW} again to show that a similar theorem holds in the case of \emph{dephasing-covariant} channels  \cite{marvian2016quantify, chitambar2016incoherent}.

Consider a complete set of orthogonal projectors $\{P_j\}_j$. Then, the \emph{dephasing map} $\mathcal{D}$ is defined by
\beq
\mathcal{D}(\cdot)=\sum_j P_j(\cdot)P_j\ .
\eeq
A quantum channel $\mathcal{F}$ is called \emph{dephasing-covariant} \cite{marvian2016quantify, chitambar2016incoherent}, if it satisfies
\beq\label{def_deph}
\mathcal{F}\circ \mathcal{D}=\mathcal{D}\circ \mathcal{F}\ .
\eeq
Note that here we are only considering quantum channels whose input and output spaces are the same.

The \emph{relative entropy of coherence} \cite{Coh_Plenio} is defined by  
\beq
\Lambda(\rho)=\inf_{\omega\in \mathcal{I}}S(\rho\|\omega)=S(\rho\|\mathcal{D}(\rho))= S(\mathcal{D}(\rho))-S(\rho)\ ,
\eeq  
where $\mathcal{I}$  is the set of \emph{incoherent} states relative this basis, i.e. states that can be written as $\sum_j p_j P_j$ for a probability distribution $p_j$. It can be easily shown that the relative entropy of coherence is non-increasing under dephasing-covariant channels \cite{marvian2016quantify, chitambar2016incoherent}. 

\begin{theorem}\label{Thm:speak}
Suppose, under the action of dephasing-covariant channel $\mathcal{F}$ the relative entropy of coherence of state $\rho$ drops by $\Delta \Lambda$, i.e.
\beq
\Delta \Lambda=\Lambda(\rho)-\Lambda(\mathcal{F}(\rho))\ .
\eeq 
Then, there exists a dephasing-covariant channel $\mathcal{R}$ such that
\beq
F(\mathcal{R}\circ\mathcal{F}(\rho),\rho)\ge 2^{-\Delta\Lambda/2}\ .
\eeq
\end{theorem}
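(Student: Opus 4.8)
The plan is to run the argument of Theorem~\ref{Thm2} essentially verbatim, with the twirling map $\mathcal{G}$ replaced everywhere by the dephasing map $\mathcal{D}$ and ``G-invariant'' replaced by ``incoherent.'' Concretely, I would apply Theorem~\ref{FRW} with $\mathcal{N}=\mathcal{F}$ and reference state $\kappa=\mathcal{D}(\rho)$. The support hypothesis of Theorem~\ref{FRW} is met because $\mathrm{supp}(\rho)\subseteq\mathrm{supp}(\mathcal{D}(\rho))$: if $\langle\psi|\mathcal{D}(\rho)|\psi\rangle=0$ then $P_j|\psi\rangle\in\ker\rho$ for every $j$, whence $|\psi\rangle=\sum_j P_j|\psi\rangle\in\ker\rho$. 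Unlike in Theorem~\ref{Thm2}, no averaging step is needed, since dephasing-covariance is already stated as $\mathcal{F}\circ\mathcal{D}=\mathcal{D}\circ\mathcal{F}$ in Eq.~(\ref{def_deph}). This identity lets me rewrite $S(\mathcal{F}(\rho)\|\mathcal{F}\circ\mathcal{D}(\rho))=S(\mathcal{F}(\rho)\|\mathcal{D}\circ\mathcal{F}(\rho))$, so that the monotonicity gap appearing in Theorem~\ref{FRW} becomes exactly $S(\rho\|\mathcal{D}(\rho))-S(\mathcal{F}(\rho)\|\mathcal{D}\circ\mathcal{F}(\rho))=\Lambda(\rho)-\Lambda(\mathcal{F}(\rho))=\Delta\Lambda$.

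Theorem~\ref{FRW} then produces a recovery channel $\mathcal{R}_\kappa$ satisfying $\Delta\Lambda\ge -2\log F(\mathcal{R}_\kappa\circ\mathcal{F}(\rho),\rho)$, and rearranging gives $F(\mathcal{R}_\kappa\circ\mathcal{F}(\rho),\rho)\ge 2^{-\Delta\Lambda/2}$, which is the claimed bound. The only remaining task is to verify that this particular $\mathcal{R}_\kappa$ is itself dephasing-covariant. Here I would argue as in the covariant case. First, taking adjoints in $\mathcal{F}\circ\mathcal{D}=\mathcal{D}\circ\mathcal{F}$ and using $\mathcal{D}^\dag=\mathcal{D}$ shows $\mathcal{F}^\dag\circ\mathcal{D}=\mathcal{D}\circ\mathcal{F}^\dag$, so $\mathcal{F}^\dag$ is dephasing-covariant. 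Second, both $\kappa=\mathcal{D}(\rho)$ and $\mathcal{F}(\kappa)=\mathcal{F}\circ\mathcal{D}(\rho)=\mathcal{D}\circ\mathcal{F}(\rho)$ are incoherent, hence commute with every $P_j$, and so do all their functional powers $\kappa^{\pm it}$, $\sqrt{\kappa}$, and $\mathcal{F}(\kappa)^{-1/2}$.

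Since left-and-right multiplication by any operator commuting with all $P_j$ commutes with $\mathcal{D}$, each factor of $\mathcal{R}_\kappa=\mathcal{V}_{\kappa,t}\circ\mathcal{R}_{\kappa,P}\circ\mathcal{V}_{\mathcal{F}(\kappa),-t}$ is dephasing-covariant: the two rotations $\mathcal{V}_{\kappa,t}$ and $\mathcal{V}_{\mathcal{F}(\kappa),-t}$ because they conjugate by the diagonal operators $\kappa^{it}$ and $\mathcal{F}(\kappa)^{-it}$, and the Petz map $\mathcal{R}_{\kappa,P}$ because it is the composition of conjugation by $\mathcal{F}(\kappa)^{-1/2}$, the dephasing-covariant map $\mathcal{F}^\dag$, and conjugation by $\sqrt{\kappa}$. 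As the composition of dephasing-covariant channels is dephasing-covariant, this finishes the proof.

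I do not expect a serious obstacle, since the entire argument is structurally identical to that of Theorem~\ref{Thm2}, with the role of the $G$-invariant state $\mathcal{G}(\rho)$ played by the incoherent state $\mathcal{D}(\rho)$. The only point deserving care is the final one, that the rotation channels and the Petz map inherit dephasing-covariance; but this reduces entirely to the elementary fact that conjugation by an operator diagonal with respect to $\{P_j\}$ commutes with $\mathcal{D}$, which is immediate from $P_j\,\omega=\omega\,P_j$ for incoherent $\omega$ together with the definition $\mathcal{D}(\cdot)=\sum_j P_j(\cdot)P_j$.
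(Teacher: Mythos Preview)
Your proposal is correct and follows essentially the same approach as the paper's own proof: apply Theorem~\ref{FRW} with $\mathcal{N}=\mathcal{F}$ and $\kappa=\mathcal{D}(\rho)$, use $\mathcal{F}\circ\mathcal{D}=\mathcal{D}\circ\mathcal{F}$ to identify the relative-entropy drop with $\Delta\Lambda$, and then verify that the rotated Petz recovery map is dephasing-covariant via $\mathcal{D}^\dag=\mathcal{D}$ and the incoherence of $\mathcal{D}(\rho)$ and $\mathcal{F}\circ\mathcal{D}(\rho)$. Your explicit check of the support hypothesis $\mathrm{supp}(\rho)\subseteq\mathrm{supp}(\mathcal{D}(\rho))$ is a nice addition that the paper leaves implicit.
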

\begin{proof}
The result follows from theorem \ref{FRW} for the special case where $\mathcal{N}=\mathcal{F}$ is a dephasing-covaraint channel, and $\kappa=\mathcal{D}(\rho)$. 

First, note that
\beq
S(\rho\|\mathcal{D}(\rho))-S(\mathcal{F}(\rho)\|\mathcal{F}\circ\mathcal{D}(\rho))=S(\rho\|\mathcal{D}(\rho))-S(\mathcal{F}(\rho)\|\mathcal{D}\circ\mathcal{F}(\rho))=\Delta\Lambda\ ,
\eeq
where the first equality follows from the fact that channel $\mathcal{F}$ is Dephasing-covariant.

Therefore, from theorem \ref{FRW} we know  the  recovery channel $\mathcal{R}_\kappa$ defined by Eq.(\ref{Rec_gen}) satisfies 
\beq\label{qwed}
\Delta\Lambda= S(\rho\|\mathcal{D}(\rho))- S(\mathcal{F}(\rho)\| \mathcal{F}\circ \mathcal{D}(\rho)\ge -2\log F(\mathcal{R}_\kappa\circ\mathcal{F}(\rho), \rho)\ ,
\eeq
where $\kappa=\mathcal{D}(\rho)$.  Next, we show that if we choose $\kappa=\mathcal{D}(\rho)$ and $\mathcal{N}=\mathcal{F}$, then the fact that $\mathcal{F}$ is dephasing-covariant implies that  the recovery channel $\mathcal{R}_\kappa$ is also  Dephasing-covariant.

First, by looking to the adjoint of Eq.(\ref{def_deph}) we find that if  $\mathcal{F}$ is dephasing-covaraint, then $\mathcal{F}^\dag$ is also dephasing-covariant (Note that $\mathcal{D}^\dag=\mathcal{D}$).  Combining this with the fact that states $\mathcal{D}(\rho)$, and $\mathcal{F}\circ\mathcal{D}(\rho)=\mathcal{D}\circ\mathcal{F}(\rho)$ are both incoherent we find that the Petz recovery channel, defined by Eq.(\ref{Petz_app}) is also dephasing-covariant.

Finally, using the fact that both states $\mathcal{D}(\rho)$, and $\mathcal{F}\circ\mathcal{D}(\rho)=\mathcal{D}\circ\mathcal{F}(\rho)$ are  incoherent, we find that  for all $t\in \mathbb{R}$, the operations   $\mathcal{V}_{\mathcal{D}(\rho),t}$ and  $\mathcal{V}_{\mathcal{F}\circ\mathcal{D}(\rho),t}$ defined by Eq.(\ref{ew1}) and Eq.(\ref{ew2}) are also dephasing-covariant unitary channels. Because the combination of dephasing-covaraint channels is a dephasing-covaraint channel,  we conclude that the recovery channel $ \mathcal{V}_{\mathcal{D}(\rho),t} \circ\mathcal{R}_{\mathcal{D}(\rho),P}\circ \mathcal{V}_{\mathcal{F}\circ\mathcal{D}(\rho), -t}$ which satisfies bound \ref{qwed} is also dephasing-covariant. This completes the proof.
 \end{proof}

\end{document}